\newtheorem{Theorem}{Theorem}
\newtheorem{Corollary}{Corollary}
\newtheorem{Lemma}{Lemma}
\newtheorem{Remark}{Remark}
\def\sqw{\hfill\hbox{\lower.1ex\hbox{$\sqcup$}
    \kern-1.02em\lower.1ex\hbox{$\sqcap$}}\ }
\newcommand{\qed}{\hfill \mbox{\raggedright \rule{.07in}{.1in}}}
\newenvironment{proof}{\vspace{1ex}\noindent{\bf Proof}\hspace{0.5em}}{\hfill\qed\vspace{1ex}}
\newcommand{\yv}{\mathbf{y}}
\newcommand{\av}{\mathbf{a}}
\newcommand{\bv}{\mathbf{b}}
\newcommand{\xv}{\mathbf{x}}
\newcommand{\vv}{\mathbf{v}}
\newcommand{\AM}{\mathbf{A}}
\newcommand{\BM}{\mathbf{B}}
\newcommand{\PM}{\mathbf{P}}
\newcommand{\IM}{\mathbf{I}}
\newcommand{\MM}{\mathbf{M}}
\newcommand{\NM}{\mathbf{N}}
\newcommand{\R}{\mathbb{R}}
\title{On the exponential convergence of\\ the Kaczmarz algorithm}
\author{
\IEEEauthorblockN{
Liang Dai and Thomas B. Sch\"{o}n\\
\IEEEauthorblockA{Department of Information Technology, Uppsala University,\\ 751 05 Uppsala, Sweden. E-mail: \{liang.dai, thomas.schon\}@it.uu.se}
\thanks{This work was supported by the project \emph{Probabilistic modelling of dynamical systems} (Contract number: 621-2013-5524) funded by the Swedish Research Council.}}
}
\begin{document}

\maketitle
\begin{abstract}
    The Kaczmarz algorithm (KA) is a popular method for solving a system of linear equations. In
    this note we derive a new exponential convergence result for the KA. The key allowing us to
    establish the new result is to rewrite the KA in such a way that its solution path can be
    interpreted as the output from a particular dynamical system. The asymptotic stability results
    of the corresponding dynamical system can then be leveraged to prove exponential convergence of
    the KA. The new bound is also compared to existing bounds.
\end{abstract}

\begin{IEEEkeywords}
    Kaczmarz algorithm, Stability analysis, Cyclic algorithm.
\end{IEEEkeywords}

\IEEEpeerreviewmaketitle

\begin{section}{Problem Statement}
In this note, we discuss the exponential convergence property of the Kaczmarz algorithm (KA) \cite{c4}. Since its introduction, the KA has been applied in many different fields and many new developments are reported\cite{c8}-\cite{c19}. The KA is used to find the solution to the following system of
{\em consistent} linear equations
\begin{align}
\AM\xv = \bv,
\end{align}
where $\xv \in \mathbb{R}^{n}$ denotes the unknown vector, $\AM \in \mathbb{R}^{m\times n}, m\ge n$, $\text{rank}(\AM) = n$ and $\bv \in \mathbb{R}^{m}$. Define the hyperplane $H_i$ as
\begin{align*}
H_i = \{\xv| \av_i^{T}\xv = b_i\},
\end{align*}
where the $i$-th row of $\AM$ is denoted by $\av_i^{T}$ and the $i$-th element of $\bv$ is denoted by $b_i$. Geometrically, the KA finds the solution by projecting (or approximately projecting) onto the hyperplanes cyclically from an initial approximation $\xv_0$, which reads as
\begin{align}
\xv_{k+1} = \xv_k + \lambda \frac{b_{i(k)}-\av_{i(k)}^{T}\xv_k}{\|\av_{i(k)}\|_2^2}\av_{i(k)},
\label{eq.rec}
\end{align}
where $ i(k) = \text{mod}(k,m) +1.$
 In the update equation~\eqref{eq.rec}, $\lambda$ is the relaxation parameter, which satisfies $0< \lambda < 2$. We use the Matlab convention $\text{mod}(\cdot,\cdot)$ to denote the {\em modulus after division} operation and $\|\cdot\|_2$ to denote the spectral norm of a matrix.

 It is well-known that the KA is sometimes rather slow to converge. This is especially true when
 several consecutive row vectors of the matrix $\AM$ are in some sense "close" to each other. In
 order to overcome this drawback, the Randomized Karczmarz Algorithm (RKA) algorithm was introduced
 in~\cite{c10} for $\lambda =1$. The key of the RKA is that, instead of performing the hyperplane projections
 cyclically in a deterministic order, the projections are performed in a random order. More
 specifically, at time $k$, select a hyperplane $H_p$ to project with probability
 $\frac{\|\av_p\|_2^2}{\|\AM\|_{F}^2}$, for $p = 1,\cdots,m$. Note that $\|\cdot\|_{F}$ is used to denote
 the Frobenius norm of a matrix. Intuitively speaking, the involved randomization is performing a
 kind of "preconditioning" to the original matrix equations~\cite{c12}, resulting in a faster
 exponential convergence rate, as established in~\cite{c10}.

 The specific and predefined ordering of the projections in the KA makes it challenging to obtain a
 tight bound of the convergence rate of the method. In~\cite{c22}, the authors build up the
 convergence rate of the KA by exploiting the Meany inequality~\cite{c23}, which works for the case
 $\lambda=1$ in~\eqref{eq.rec}. ~\cite{c24,c26} also established convergence rates for
 the KA, for $\lambda \in (0,2)$. In Section~\ref{sec:DiscNumIll}, we will compare
 these results in more detail.

In this note, we present a different way to characterize the convergence property of the KA described in
\eqref{eq.rec}. The key underlying our approach is that we
interpret the solution path of the KA as the output of a particular dynamical system. By studying
the stability property of this related dynamical system, we then obtain new exponential convergence
results for the KA. Related to this, it is interesting to note that the so-called Integral Quadratic
Constraints (IQCs) has recently been used in studying the convergence rate of
first-order algorithms applied to solve general convex optimization problems~\cite{c25}.

The note will be organized as follows. In the subsequent section we make use of the sub-sequence
$\{\xv_{jm}-\xv\}_{j=0}^{\infty}$ to enable the derivation of the new exponential convergence
result. In Section~\ref{sec:DiscNumIll} we discuss its connections and differences to existing
results. Conclusions and ideas for future work are provided in Section~\ref{sec:Conc}.
\end{section}

\begin{section}{The new convergence result}
First, let us introduce the matrix $\BM\in \R^{m\times n}$, for which the $i$-th row $\bv_i^{T}$ is defined as $\bv_i \triangleq \frac{\av_i}{\|\av_i\|_2}, i = 1,\cdots,m$. Furthermore, let $\PM_i \triangleq \bv_i\bv_i^{T}$ for $i = 1,2,\cdots,m$ and let $\theta_{k} \triangleq \xv_k - \xv$ for $k\ge0$. Using this new notation allows us to rewrite~\eqref{eq.rec} according to
\begin{align}
\theta_{k+1} = (\IM - \lambda \PM_{i(k)})\theta_k,
\label{eq.dy}
\end{align}
which can be interpreted as a discrete time-varying linear dynamical system. Hence, this relation inspires us to study the KA by employing the techniques for analyzing the stability properties of time-varying linear systems, see e.g.~\cite{c20,c21}.

In what follows we will focus on analyzing the convergence rate of the sub-sequence $\{\|\theta_{jm}\|^2\}_{j=0}^{\infty}$. Given the fact that $ i(k) = \text{mod}(k,m) +1$, we have
\begin{align*}
\theta_{(j+1)m} = \left( \prod_{i=1}^{m}{(\IM - \lambda \PM_i)}\right) \theta_{jm}\triangleq \MM_m \theta_{jm}.
\end{align*}

The following theorem provides an upper bound on the spectral norm of $\MM_m$.
\begin{Theorem}
Let $\rho \triangleq \|\MM_m\|_2$ and $0< \lambda \le 2$, then it holds that
\begin{align}
\rho^2 \le \rho_1 \triangleq  1-\frac{\lambda(2-\lambda)}{(2+\lambda^2m^2)\|\BM^{\dagger}\|_2^2},
\label{eq.bd}
\end{align}
where $\BM^{\dagger}$ denotes the pseudo-inverse of the matrix $\BM$.
\end{Theorem}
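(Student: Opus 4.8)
The plan is to track the squared Euclidean norm along one full cycle of the recursion~\eqref{eq.dy} and then convert the resulting ``energy decrease'' into a contraction estimate for $\MM_m$. Fix an arbitrary $\theta_0\in\R^n$ and, re-indexing so that a cycle starts at $0$, write $\theta_{l+1}=(\IM-\lambda\PM_{l+1})\theta_l$ for $l=0,\dots,m-1$, so that $\theta_m=\MM_m\theta_0$. Since each $\bv_i$ is a unit vector, $\PM_i^2=\PM_i$, hence $(\IM-\lambda\PM_i)^T(\IM-\lambda\PM_i)=\IM-\lambda(2-\lambda)\PM_i$, which yields the exact one-step identity $\|\theta_{l+1}\|^2=\|\theta_l\|^2-\lambda(2-\lambda)\,c_l^2$, where $c_l\triangleq\bv_{l+1}^T\theta_l$. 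Summing over the cycle gives $\|\theta_m\|^2=\|\theta_0\|^2-\lambda(2-\lambda)\sum_{l=0}^{m-1}c_l^2$, so the whole problem reduces to lower-bounding $C^2\triangleq\sum_{l=0}^{m-1}c_l^2$ in terms of $\|\theta_0\|^2$.

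The second step connects $C^2$ to the smallest singular value of $\BM$. Unrolling the recursion gives $\theta_l=\theta_0-\lambda\sum_{s=0}^{l-1}c_s\bv_{s+1}$, hence, with $d_l\triangleq\bv_{l+1}^T\theta_0$, we get $d_l=c_l+\lambda\sum_{s=0}^{l-1}(\bv_{l+1}^T\bv_{s+1})\,c_s$. Using $(a+b)^2\le 2a^2+2b^2$, the Cauchy--Schwarz inequality, and $|\bv_{l+1}^T\bv_{s+1}|\le 1$, this gives $d_l^2\le 2c_l^2+2\lambda^2 l\,C^2$; summing over $l$ and using $\sum_{l=0}^{m-1}l=m(m-1)/2\le m^2/2$ yields the key inequality $\sum_{l=0}^{m-1}d_l^2\le(2+\lambda^2 m^2)\,C^2$. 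On the other hand $\sum_{l=0}^{m-1}d_l^2=\|\BM\theta_0\|^2$, and since $\AM$ (hence $\BM$) has full column rank $n$, we have $\|\BM\theta_0\|^2\ge\sigma_{\min}(\BM)^2\|\theta_0\|^2=\|\theta_0\|^2/\|\BM^{\dagger}\|_2^2$. Combining the two bounds gives $C^2\ge\|\theta_0\|^2/\big((2+\lambda^2 m^2)\|\BM^{\dagger}\|_2^2\big)$.

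The third step just substitutes this lower bound into the energy identity of the first step; since $\lambda(2-\lambda)\ge 0$ for $\lambda\in(0,2]$, this produces $\|\theta_m\|^2=\|\MM_m\theta_0\|^2\le\rho_1\|\theta_0\|^2$. As $\theta_0$ was arbitrary, $\rho^2=\|\MM_m\|_2^2\le\rho_1$, which is precisely~\eqref{eq.bd}. (The case $\lambda=2$ is trivial since then $\rho_1=1$.)

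I expect the only real obstacle to be the second step: the idea of comparing the ``true'' residuals $c_l=\bv_{l+1}^T\theta_l$ that drive the decrease against the ``reference'' residuals $d_l=\bv_{l+1}^T\theta_0$ (which are exactly the quantities controlled by $\sigma_{\min}(\BM)=1/\|\BM^{\dagger}\|_2$), together with being careful enough in the Cauchy--Schwarz and summation bookkeeping to land at the factor $2+\lambda^2 m^2$ rather than something larger. The first and third steps are routine once the one-step identity $\|\theta_{l+1}\|^2=\|\theta_l\|^2-\lambda(2-\lambda)c_l^2$ is in hand.
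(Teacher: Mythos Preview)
Your proof is correct and follows essentially the same strategy as the paper's: the one-step energy identity, the comparison of the ``active'' residuals $c_l=\bv_{l+1}^T\theta_l$ with the reference residuals $d_l=\bv_{l+1}^T\theta_0$ via $(a+b)^2\le 2a^2+2b^2$ and Cauchy--Schwarz, and finally the use of $\|\BM\theta_0\|^2\ge\|\theta_0\|^2/\|\BM^\dagger\|_2^2$. The only cosmetic differences are that the paper works with the top singular vector $\vv_0$ instead of an arbitrary $\theta_0$, and phrases the intermediate step as a bound on $\|\vv_i-\vv_0\|^2$ (eq.~\eqref{eq.bd2}) rather than unrolling the recursion into scalars; the resulting constant $2+\lambda^2 m(m-1)\le 2+\lambda^2 m^2$ is identical.
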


\begin{proof}
Let $\vv_0 \in \R^{n}$ be a vector satisfying $\MM_m\vv_0 = \rho\vv_0$, $\|\vv_0\|_2 = 1$ and let $\vv_{i}= (\IM - \lambda\PM_{i})\vv_{i-1}$ for $i = 1,\cdots,m$. It follows that $\vv_m = \MM_m \vv_0$ and $\|\vv_m\|^2 = \rho^2$.

Notice that $\PM_i^2 = \PM_i$, so we have
\begin{align*}
(\IM-\lambda \PM_i)^2 &= \IM - (2\lambda-\lambda^2)\PM_i,
\end{align*}
for $i = 1,\cdots,m$.
Hence it holds that
\begin{align*}
\|\vv_i\|^2 &= \vv_{i-1}^{T}(\IM-\lambda \PM_i)^2\vv_{i-1} \\
& = \vv_{i-1}^{T}(\IM-\lambda(2-\lambda) \PM_i)\vv_{i-1} \\
& = \|\vv_{i-1}\|^2 - \lambda(2-\lambda)\|\PM_i\vv_{i-1}\|^2,
\end{align*}
which in turn implies that
\begin{align}
\lambda(2-\lambda)\sum_{i=1}^{m}\|\PM_i\vv_{i-1}\|^2 = \|\vv_0\|^2 - \|\vv_m\|^2 = 1- \rho^2.
\label{eq.bd1}
\end{align}
Also, for any $i \in \{1,\cdots,m\}$, we have that
\begin{align*}
&\|\vv_{i}-\vv_{0}\| \\
&= \left\| \sum_{k=1}^{i}(\vv_{k}-\vv_{k-1})\right\| = \lambda \left\|\sum_{k=1}^{i}\PM_k\vv_{k-1}\right\| \\
& \le \lambda \sum_{k=1}^{i}\left\|\PM_k \vv_{k-1}\right\| \le \lambda \sqrt{i} \sqrt{\sum_{k=1}^{i}\left\|\PM_k \vv_{k-1}\right\|^2}\\
& \le  \sqrt{\lambda i} \sqrt{\lambda\sum_{k=1}^{m}\left\|\PM_k \vv_{k-1}\right\|^2}
\end{align*}
Together with~\eqref{eq.bd1}, we get
\begin{align}
\left\|\vv_{i}-\vv_{0}\right\|^2 \le \frac{\lambda i}{2-\lambda}(1-\rho^2).
\label{eq.bd2}
\end{align}
Meanwhile, we have that
\begin{align}
&\lambda\vv_{0}^{T}\BM^{T}\BM\vv_{0} \nonumber\\
&= \lambda \sum_{k=1}^{m}\vv_{0}^{T}\PM_k\vv_{0} = \lambda \sum_{k=1}^{m}\|\PM_k\vv_{0}\|^2 \nonumber\\
& =\lambda \sum_{k=1}^{m}\|\PM_k[\vv_{k-1} + (\vv_{0}-\vv_{k-1})]\|^2 \nonumber\\
&\le 2\lambda \sum_{k=1}^{m}\|\PM_k\vv_{k-1}\|^2 + 2\lambda \sum_{k=1}^{m}\|\PM_k(\vv_{k-1}- \vv_{0})\|^2 \nonumber\\
& \le  2\lambda \sum_{k=1}^{m}\|\PM_k\vv_{k-1}\|^2 + 2\lambda \sum_{k=1}^{m}\|\vv_{k-1}- \vv_{0}\|^2.
\label{eq.loose}
\end{align}
Together with (\ref{eq.bd1}) and (\ref{eq.bd2}), we have that
\begin{align*}
\lambda\vv_{0}^{T}\BM^{T}\BM\vv_{0} \le \frac{2(1-\rho^2)}{2-\lambda} + 2\lambda \sum_{k=1}^{m}\frac{\lambda (k-1)}{2-\lambda} (1-\rho^2)
\end{align*}
or equivalently
\begin{align}
\lambda\vv_{0}^{T}\BM^{T}\BM\vv_{0}  \le \frac{1-\rho^2}{2-\lambda}\left(2+\lambda^2m(m-1)\right),
\end{align}
hence it follows that
\begin{align*}
\rho^2 \le 1-\frac{\lambda(2-\lambda)\vv_{0}^{T}\BM^{T}\BM\vv_{0} }{2+\lambda^2m(m-1)}.
\end{align*}
Since $\vv_{0}^{T}\BM^{T}\BM\vv_{0} \ge\frac{1}{\|\BM^{\dagger}\|_2^2} $, we conclude that
\begin{align}
  \label{eq:rho2}
  \rho^2 \le  1-\frac{\lambda(2-\lambda)}{\left(2+\lambda^2m(m-1)\right)\|\BM^{\dagger}\|_2^2}.
\end{align}

Finally, notice that $m(m-1) \le m^2$ holds for any natural number $m$, which concludes the proof.
\end{proof}

%

\vspace{+1mm}
\begin{Remark}
  Notice that in the proof of Theorem~1, \eqref{eq.loose} is the main approximation step , and a better approximation here will lead to an improvement of the bound.
\end{Remark}

\vspace{+3mm}
The following corollary characterizes the convergence of the KA under $\lambda=1$, which will be
used in the subsequent section to enable comparison to the results given in \cite{c22,c23}. We omit the proof since
it is a direct implication of Theorem 1.

\vspace{+1mm}
\begin{Corollary}
For the KA with $\lambda=1$ in~\eqref{eq.rec}, if $m\ge n\ge2$, we have that
\begin{align}
\rho^2 \le 1-\frac{1}{2m^2\|\BM^{\dagger}\|_2^2}.
\label{eq.comp1}
\end{align}
\end{Corollary}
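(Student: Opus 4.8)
The plan is to obtain the claim by specializing Theorem~1 to the case $\lambda = 1$ and then absorbing the additive constant in the denominator using the standing hypothesis $m \ge 2$. First I would set $\lambda = 1$ in the bound~\eqref{eq.bd}: then $\lambda(2-\lambda) = 1$ and $2 + \lambda^2 m^2 = m^2 + 2$, so Theorem~1 directly yields
\begin{align}
\rho^2 \le 1 - \frac{1}{(m^2+2)\,\|\BM^{\dagger}\|_2^2}.
\end{align}

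It then remains only to compare $m^2 + 2$ with $2m^2$. Since $\AM$ has full column rank $n$, so does $\BM$ (its rows are nonzero rescalings of those of $\AM$), hence $\BM^{\dagger}$ exists and $\|\BM^{\dagger}\|_2^2 > 0$. Because $m \ge n \ge 2$ forces $m \ge 2$ and thus $m^2 \ge 4 \ge 2$, we get $m^2 + 2 \le 2m^2$, so $\tfrac{1}{(m^2+2)\|\BM^{\dagger}\|_2^2} \ge \tfrac{1}{2m^2\|\BM^{\dagger}\|_2^2}$, and substituting this into the displayed inequality gives~\eqref{eq.comp1}.

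There is essentially no obstacle here; the only points to watch are the direction of the inequality — enlarging the denominator makes the subtracted term smaller and hence weakens the bound in the permitted direction — and the need to invoke $m \ge 2$ (guaranteed by $n \ge 2$) so that $m^2 \ge 2$. Alternatively, one could start from the sharper intermediate estimate~\eqref{eq:rho2}, which carries $m(m-1)$ in place of $m^2$; the required comparison then reads $2 + m(m-1) \le 2m^2$, i.e.\ $(m-1)(m+2) \ge 0$, which holds for every natural number $m$. Either route reproduces the Corollary, and I would present the one based directly on the stated bound~\eqref{eq.bd}.
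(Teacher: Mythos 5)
Your argument is correct and is exactly the derivation the paper has in mind: it states the Corollary as "a direct implication of Theorem~1," namely setting $\lambda=1$ in~\eqref{eq.bd} and using $m\ge 2$ to replace $m^2+2$ by $2m^2$, which only weakens the bound in the allowed direction. Nothing further is needed.
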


Next, we will derive an improvement over the bound (\ref{eq.bd}), enabled by partitioning the matrix $\AM$ into non-overlapping sub-matrices. Let $q = \lceil\frac{m}{n}\rceil + 1$, where $\lceil x \rceil$ denotes the smallest number which is greater or equal to $x$. Define the following sets as $T_i = \{ (i-1)n+1, \cdots, i n\}$, for $i = 1, \cdots, q-1$ and $T_q = \{ (q-1)n+1, \cdots, m\}$.  Further, for $i = 1,\cdots, q$,  define $\BM_i$ as the sub-matrix of $\BM$ with the rows indexed by the set $T_i$, and $\NM_i = \prod_{j\in T_i}(\IM - \lambda \PM_j)$.

\begin{Corollary}
Based on the previous definitions, and further assume that all the sub-matrices $\BM_i$ for for $i = 1,\cdots, q$ are of rank $n$, then we have that
\begin{align}
\rho^2 \le \rho_2 \triangleq \prod_{i=1}^{q}\left(1 - \frac{\lambda(2-\lambda)}{\left(2+\lambda^2n(n-1)\right)\|\BM_i^{\dagger}\|_2^2}\right)
\label{eq.prodbd}
\end{align}
\end{Corollary}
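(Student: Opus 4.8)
The plan is to combine submultiplicativity of the spectral norm with a block-wise re-use of Theorem~1. First I would observe that, since the index sets $T_1,\dots,T_q$ are consecutive and partition $\{1,\dots,m\}$, the ordered product defining $\MM_m$ splits as $\MM_m=\NM_q\NM_{q-1}\cdots\NM_1$ (an empty block contributes the identity and may be discarded, which is harmless because $\|\IM-\lambda\PM_j\|_2=\max(1,|1-\lambda|)=1$ for $\lambda\in(0,2]$). Applying $\|\cdot\|_2$ and submultiplicativity then gives $\rho=\|\MM_m\|_2\le\prod_{i=1}^{q}\|\NM_i\|_2$, hence $\rho^2\le\prod_{i=1}^{q}\|\NM_i\|_2^2$.

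Next I would bound each $\|\NM_i\|_2^2$ by invoking Theorem~1 on a reduced problem. The point is that $\NM_i=\prod_{j\in T_i}(\IM-\lambda\PM_j)$ has exactly the structure of the matrix $\MM_m$ analysed in Theorem~1, but now associated with the (still consistent) linear system whose row-normalised coefficient matrix is $\BM_i$; by hypothesis $\BM_i$ has full column rank $n$, which is the only structural property of $\BM$ actually used in that proof — the inequality $m\ge n$ is never invoked directly, only positive-definiteness of the Gram matrix. Running the proof of Theorem~1 verbatim with $\BM\mapsto\BM_i$ and $m\mapsto|T_i|$ therefore yields
\begin{align*}
\|\NM_i\|_2^2\le 1-\frac{\lambda(2-\lambda)}{\bigl(2+\lambda^2|T_i|(|T_i|-1)\bigr)\|\BM_i^{\dagger}\|_2^2}.
\end{align*}
Since each block contains at most $n$ rows, $|T_i|(|T_i|-1)\le n(n-1)$, which only enlarges the denominator, so $\|\NM_i\|_2^2\le 1-\frac{\lambda(2-\lambda)}{(2+\lambda^2 n(n-1))\|\BM_i^{\dagger}\|_2^2}$; moreover each right-hand side lies in $[0,1]$ because it dominates the nonnegative quantity $\|\NM_i\|_2^2$. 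Multiplying these $q$ inequalities between nonnegative numbers gives $\rho^2\le\prod_{i=1}^{q}\|\NM_i\|_2^2\le\rho_2$, which is the claim.

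I do not expect a genuine obstacle here — the entire conceptual content sits in Theorem~1, and this corollary is essentially bookkeeping. The two places that need care are (i) confirming that the proof of Theorem~1 survives restriction to a sub-block, i.e.\ that it uses nothing beyond positive-definiteness of $\BM_i^{T}\BM_i$, so that $\vv_0^{T}\BM_i^{T}\BM_i\vv_0\ge 1/\|\BM_i^{\dagger}\|_2^2$ still holds; and (ii) the arithmetic of the partition — making sure the $T_i$ really do tile $\{1,\dots,m\}$ in order, so that the factorisation of $\MM_m$ is exact, and that no block exceeds $n$ rows, so that the uniform $n(n-1)$ bound is legitimate. The rank-$n$ assumption on the $\BM_i$ is precisely what licenses step~(i) on every block.
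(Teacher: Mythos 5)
Your proposal is correct and follows essentially the same route as the paper: factor $\MM_m=\NM_q\cdots\NM_1$, apply submultiplicativity of the spectral norm, and rerun Theorem~1's argument block-wise with $\BM\mapsto\BM_i$ and $m\mapsto|T_i|$, using the rank-$n$ assumption to get $\vv_0^{T}\BM_i^{T}\BM_i\vv_0\ge 1/\|\BM_i^{\dagger}\|_2^2$. The only difference is that you spell out the bookkeeping (block sizes at most $n$, monotonicity in $|T_i|(|T_i|-1)$, empty blocks) that the paper compresses into ``bounded analogously to Theorem~1.''
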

\begin{proof}
Notice that since
\begin{align*}
\MM_m = \NM_q\NM_{q-1}\cdots\NM_2\NM_1,
\end{align*}
we have that
\begin{align}
\rho^2 = \| \MM_m\|_2^2 \le \prod_{i=1}^{q} \| \NM_i\|_2^2.
\label{eq.prod}
\end{align}

For each $\NM_i$, the spectral norm can be bounded analogously to what was done in Theorem~1, resulting in
\begin{align*}
\| \NM_i\|_2^2 \le 1 - \frac{\lambda(2-\lambda)}{\left(2+\lambda^2n(n-1)\right)\|\BM_i^{\dagger}\|_2^2}.
\end{align*}

Finally, inserting this inequality into (\ref{eq.prod}) concludes the proof.
\end{proof}

\end{section}

\begin{section}{Discussion and numerical illustration}
\label{sec:DiscNumIll}
In Section~\ref{sec:Disc:Meany} and~\ref{sec:rka} we compare our new bound with the bounds provided by the Meany
inequality \cite{c22,c23} and the RKA, respectively. In Section~\ref{sec:rka} we also provide a
numerical illustration. Section~\ref{sec:num} is devoted to a comparison with the bound provided in~\cite{c24}, and finally Section~\ref{sec:nbd} compares with the result given by~\cite{c26}.

\subsection{Comparison with the bound given by Meany inequality}
\label{sec:Disc:Meany}
In the following, we assume that $m=n$ and $\lambda =1$. Denote the singular values of $\BM$ as $\sigma_1 \ge \sigma_2 \cdots \ge \sigma_n$, then the bound in \cite{c22,c23} given by the Meany inequality can be written as $\rho^2 \le 1- \prod_{i=1}^{n}\sigma_i^2 $, and the bound given in (\ref{eq.comp1}) can be written as $\rho^2 \le 1- \frac{\sigma_n^2}{2n^2}$. This implies that when
\begin{align}
\frac{\sigma_n^2}{2n^2} \ge \prod_{i=1}^{n}\sigma_i^2 \,\,\, \text{i.e. }\,\,\, \prod_{i=1}^{n-1}\sigma_i^2 \le \frac{1}{2n^2},
\label{condd}
\end{align}
holds, the bound in (\ref{eq.comp1}) is tighter. In the following lemma, we derive a sufficient condition, under which the inequality $\eqref{condd}$ holds.

\begin{Lemma}
\label{lem:Meany}
If $\sigma_{n-1}^2 \le \frac{(n-2)^{n-2}}{2n^n}$ holds, the inequality in~\eqref{condd} is satisfied.
\end{Lemma}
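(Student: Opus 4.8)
The plan is to use the one structural fact available, namely that the rows of $\BM$ are unit vectors, so that (recalling $m=n$ in this subsection) $\sum_{i=1}^{n}\sigma_i^2 = \|\BM\|_F^2 = \sum_{i=1}^{n}\|\bv_i\|_2^2 = n$. The whole argument will hinge on playing this linear constraint against the hypothesized smallness of $\sigma_{n-1}$.

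First I would split the product as $\prod_{i=1}^{n-1}\sigma_i^2 = \sigma_{n-1}^2\cdot\prod_{i=1}^{n-2}\sigma_i^2$. The reason for peeling off exactly the factor $\sigma_{n-1}^2$ is that a direct application of the AM-GM inequality to all $n-1$ factors is hopelessly weak --- it only yields a bound of order $\left(\tfrac{n}{n-1}\right)^{n-1}$, roughly $e$ --- whereas the hypothesis already forces $\sigma_{n-1}^2$ to be extremely small, so the remaining $n-2$ factors can afford to be bounded very crudely.

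Second, I would bound the truncated product by AM-GM, $\prod_{i=1}^{n-2}\sigma_i^2 \le \left(\tfrac{1}{n-2}\sum_{i=1}^{n-2}\sigma_i^2\right)^{n-2} \le \left(\tfrac{n}{n-2}\right)^{n-2}$, where the last step drops the nonnegative terms $\sigma_{n-1}^2$ and $\sigma_n^2$ from the sum and uses $\sum_{i=1}^{n}\sigma_i^2 = n$. Multiplying by the hypothesis $\sigma_{n-1}^2 \le \frac{(n-2)^{n-2}}{2n^n}$ then gives $\prod_{i=1}^{n-1}\sigma_i^2 \le \frac{n^{n-2}}{(n-2)^{n-2}}\cdot\frac{(n-2)^{n-2}}{2n^n} = \frac{1}{2n^2}$, which is precisely the inequality in~\eqref{condd}. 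The trivial case $n=2$ (empty truncated product, so the claimed inequality is literally the hypothesis) can be disposed of in a sentence.

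I do not anticipate a real obstacle: the only delicate point is the bookkeeping of the exponents and, more conceptually, the decision to separate off exactly one singular value before invoking AM-GM. Splitting off none leaves the estimate too lossy to ever reach $1/(2n^2)$; splitting off two or more would drag in $\sigma_{n-2}^2$, about which the hypothesis provides no information, so ``one'' is the right and essentially forced choice.
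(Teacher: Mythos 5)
Your proposal is correct and follows essentially the same route as the paper: peel off $\sigma_{n-1}^2$, bound $\prod_{i=1}^{n-2}\sigma_i^2$ by AM--GM together with $\sum_{i=1}^{n}\sigma_i^2=\|\BM\|_F^2=n$ to get the factor $\left(\frac{n}{n-2}\right)^{n-2}$, and then invoke the hypothesis to reach $\frac{1}{2n^2}$. Your explicit remarks on why exactly one singular value is split off and on the trivial $n=2$ case are sensible additions but do not change the argument.
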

\begin{proof}
Notice that
\begin{align}
\prod_{i=1}^{n-1}\sigma_i^2 = \left(\prod_{i=1}^{n-2}\sigma_i^2\right) \sigma_{n-1}^2 & \le \left(\frac{\sum_{i=1}^{n-2}\sigma_i^2}{n-2}\right)^{n-2}\sigma_{n-1}^2 \nonumber \\
&  \le \left(\frac{n}{n-2}\right)^{n-2}\sigma_{n-1}^2. \label{ine} 
\end{align}
The inequality~\eqref{ine} holds since
\begin{align*}
\sum_{i=1}^{n-2}\sigma_i^2 \le \sum_{i=1}^{n}\sigma_i^2 = \|\BM\|_{F}^2 = n.
\end{align*}
Hence, if
\begin{align*}
\left(\frac{n}{n-2}\right)^{n-2}\sigma_{n-1}^2 \le \frac{1}{2n^2}
\end{align*}
holds, or equivalently if
\begin{align*}
\sigma_{n-1}^2  \le \frac{(n-2)^{n-2}}{2n^n}
\end{align*}
holds, then  $\eqref{condd}$ holds, which concludes the proof.
\end{proof}
\begin{Remark}
  Notice that the right hand side of the inequality in Lemma~\ref{lem:Meany} is in the order of
  $\frac{1}{n^2}$ for large $n$. Another difference is that the bound provided by Theorem 1 depends explicitly on
  the size of matrix, while the bound provided by the Meany inequality does not.
\end{Remark}

\subsection{Comparison with the bound given by the RKA}
\label{sec:rka}
Let us now compare our new results to the results available for the RKA. Note that in this case, we set $\lambda =1$. If
$\{\theta_{jm}\}_{j=0}^{\infty}$ denotes the sequence generated by the RKA, then it holds
that~\cite{c10}
 \begin{align}
\mathbb{E}\|\theta_{jm}\|^2 \le \left( 1-\frac{1}{\|\AM\|_{F}^2\|\AM^{\dagger}\|_2^2} \right)^{jm}\|\theta_{0}\|^2,
\label{eq.comp2}
\end{align}
for $j\ge 1$, where $\mathbb{E}$ denotes the expectation operator with respect to the random operations up to index $jm$.

To compare (\ref{eq.comp2}) and (\ref{eq.comp1}), we make the assumption that $\AM$ is a matrix with each row normalized, i.e. $\AM = \BM$, for simplicity. It follows that $\|\BM\|_F^2=m\le m^2$, and
\begin{align}
1-\frac{1}{2m^2\|\BM^{\dagger}\|_2^2} \ge 1-\frac{1}{\|\BM\|_F^2\|\BM^{\dagger}\|_2^2}.
\label{eq.ineq1}
\end{align}
Furthermore, since $\|\BM\|_F^2\|\BM^{\dagger}\|_2^2 \ge 1$, we have that
\begin{align}
1-\frac{1}{\|\BM\|_F^2\|\BM^{\dagger}\|_2^2}\ge \left(1-\frac{1}{\|\BM\|_F^2\|\BM^{\dagger}\|_2^2}\right)^{m},
\label{eq.ineq2}
\end{align}
and combining (\ref{eq.ineq1}) and (\ref{eq.ineq2}), results in
\begin{align*}
1-\frac{1}{2m^2\|\BM^{\dagger}\|_2^2} \ge \left(1-\frac{1}{\|\BM\|_F^2\|\BM^{\dagger}\|_2^2}\right)^{m}.
\end{align*}
The above inequality implies that the bound given by \eqref{eq.comp1} is more conservative than the one given by the RKA.

Next, a numerical illustration is implemented to compare the bounds given by \eqref{eq.comp1},
\eqref{eq.prodbd} and \eqref{eq.comp2}. The setup is as follows. Let $m= 30$ and $n= 3$, generate
$\AM = \text{randn}(30,3)$ and normalize each row to obtain $\BM$, generate $\xv =
\text{randn}(3,1)$ and compute $\yv = \BM \xv$. In the implementation of the RKA, we run
$1\thinspace 000$ realizations with the same initial value $\xv_0$ to obtain an average performance
result, which is reported in Fig.~1.

From the left panel in Fig.~1, we can see that the bound \eqref{eq.comp2} for characterizing the convergence of the RKA is closer to the real performance of the RKA, while the bounds given by \eqref{eq.comp1} and \eqref{eq.prodbd} for bounding the convergence of the KA are further away from the real performance of the KA.

The right panel in Fig.~1 shows a zoomed illustration of the bound given by \eqref{eq.comp1} and
\eqref{eq.prodbd}. We can observe that the bound given by~\eqref{eq.prodbd} improves upon~\eqref{eq.comp1}, which is enabled by the partitioning of the rows of the matrix.

\begin{figure}[htbp] 
\centering
\includegraphics[width=4.3in]{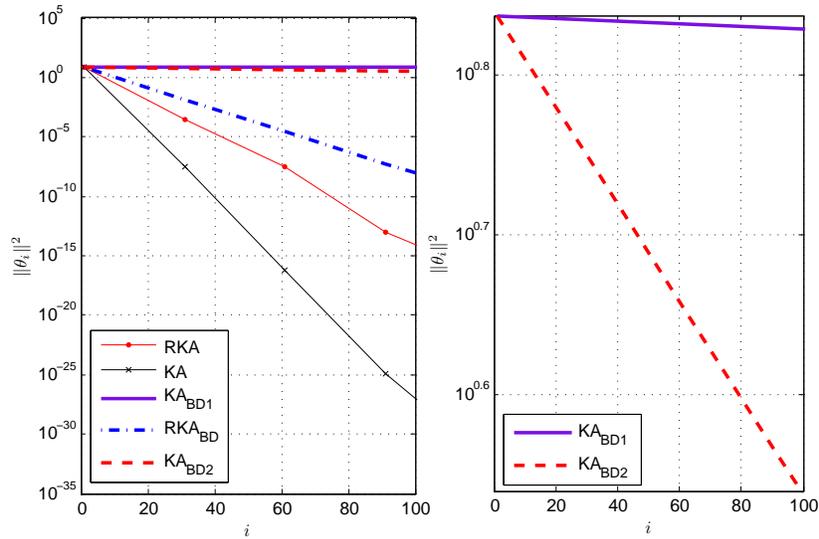}
\caption{In the left panel, the curves with tags '$\text{KA}$' and '$\text{RKA}$' illustrate the
  real performance of the KA and the RKA. The curves with tags '$\text{KA}_{\text{BD1}}$',
  '$\text{KA}_{\text{BD2}}$' and '$\text{RKA}_{\text{BD}}$' illustrate the bounds given by
  (\ref{eq.comp1}), (\ref{eq.prodbd}) and  (\ref{eq.comp2}), respectively. In the right panel, a zoomed illustration for the curves '$\text{KA}_{\text{BD1}}$' and '$\text{KA}_{\text{BD2}}$' in the left panel is given.}
\label{fig.rate}
\end{figure}
\subsection{Comparison with the bound given in \cite{c24}}
\label{sec:num}
To compare the result given in \cite{c24}, we assume that $\AM = \BM$, and that they are square and
invertible matrices.  Under these assumptions, the involved quantity $\mu$ in Corollary~4.2 of
\cite{c24} can be approximated by $\frac{1}{\sqrt{m}\|\BM^{\dagger}\|_2}$ given the results in
Theorem~2.2 of~\cite{c24}. Hence, the convergence rate of the KA given by Theorem~3.1 in~\cite{c24}
can be written as
\begin{align}
\rho^2 \le 1- \frac{\lambda(2-\lambda)}{m\left[1+(m-1)\lambda^2\right]\|\BM^{\dagger}\|_2^2},
\label{eq1}
\end{align}
where $\lambda \in (0,2)$. The result of the current work reads as
\begin{align}
\rho^2 \le 1- \frac{\lambda(2-\lambda)}{(2 + \lambda^2 m^2)\|\BM^{\dagger}\|_2^2},
\label{eq2}
\end{align}
where $\lambda \in (0,2)$. A closer look at the two bounds~\eqref{eq1} and~\eqref{eq2} reveals the
following:
\begin{enumerate}
\item
The optimal choice for the right hand side (RHS) of~\eqref{eq1} is $\lambda = \frac{\sqrt{4m-3}-1}{2(m-1)}$, resulting in $\rho^2 \leq 1
- \frac{2}{m(\sqrt{4m-3}+1)\|\BM^{\dagger}\|_2^2}$. When $m$ is large, $\rho^2$ decreases with the speed $\frac{1}{m^{1.5}\|\BM^{\dagger}\|_2^2}$.
\item
When $\lambda =
\frac{\sqrt{2}}{m}$ (a suboptimal choice for simplicity), \eqref{eq2} gives that  $\rho^2 \leq 1 - \frac{\sqrt{2}(2-\frac{\sqrt{2}}{m})}{4m\|\BM^{\dagger}\|_2^2}$. When $m$ is large, $\rho^2$ decreases in the speed of $\frac{1}{m\|\BM^{\dagger}\|_2^2}$, faster than the one in \cite{c24}. A comparison of both bounds when the optimal $\lambda$ are chosen is given in Fig. $\eqref{fig.rate2}$.

\item
When $\lambda$ is chosen to be 1, both bounds decrease in the order of $m^{-2}$.
\end{enumerate}
\begin{figure}[htbp] 
\centering
\includegraphics[width=4.3in]{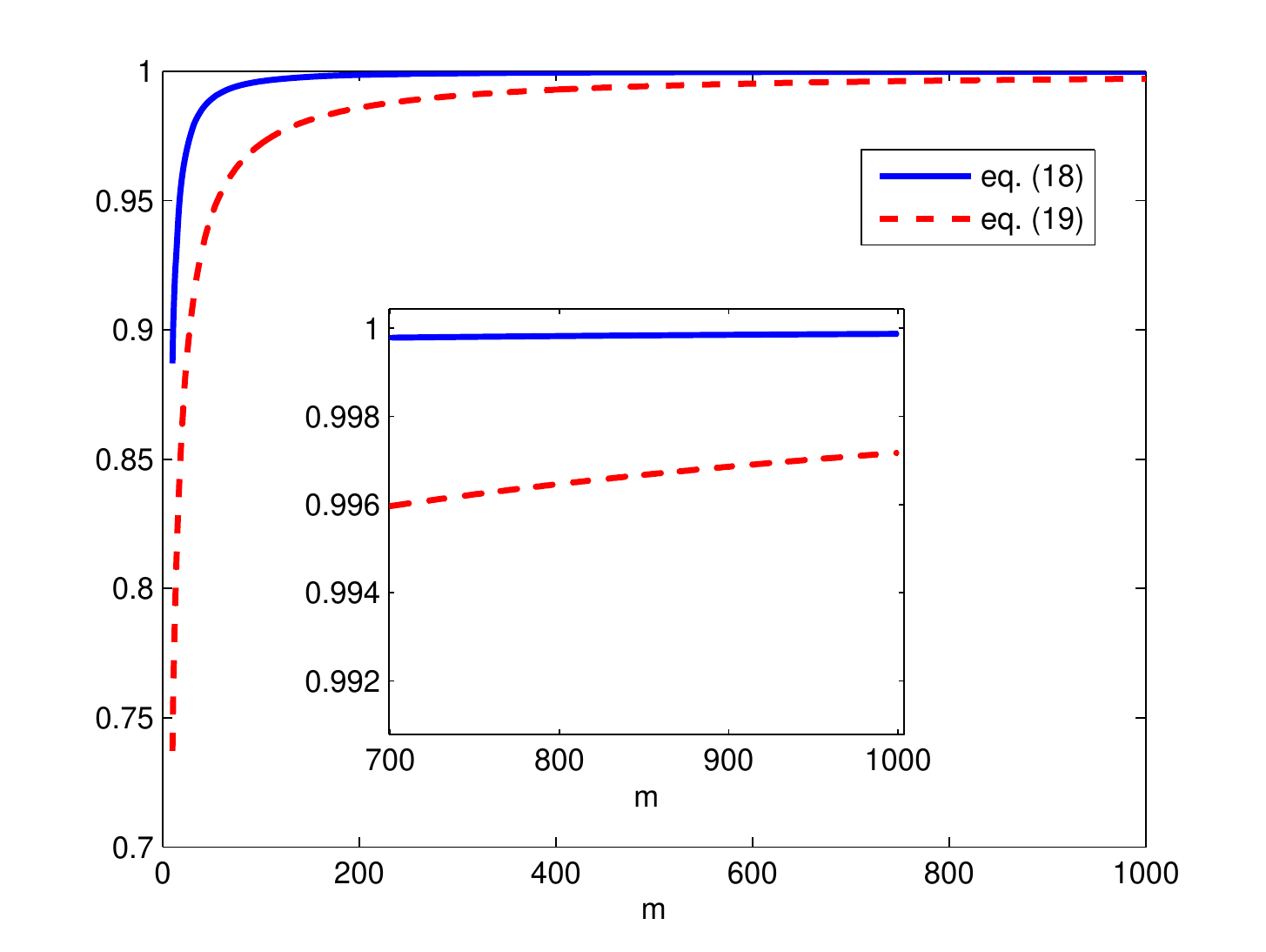}
\caption{The bounds \eqref{eq1} and \eqref{eq2} (using the optimal parameters) are plotted for the
  given $\|\BM^{\dagger}\|_2 = 0.5$ and $m$ ranging from $10$ to $1\thinspace 000$. The result shows
  that the bound proposed in this work is always lower than the one given in \cite{c24} under the
  experimental settings.}
\label{fig.rate2}
\end{figure}

\subsection{Comparison with the bound given in \cite{c26}}
\label{sec:nbd}
We will once again assume that each row of
$\AM$ is normalized, i.e. $\BM = \AM$. In \cite{c26} the authors makes use of a {\em subspace
  correction method} in studying the convergence speed of the KA. They show that (see eq.~(31) in
\cite{c26}), when the best relaxation parameter $\lambda$ is chosen, $\rho^2$ can be bounded from
above according to
\begin{align}
  \rho^2 \le 1-\frac{1}{\lfloor\log_2(2m)\rfloor\|\BM\|_2^2 \|\BM^{\dagger}\|_2^2}.
  \label{eq.newbd}
\end{align}
As we discussed in the previous section, when a near-optimal $\lambda$ (i.e. $\lambda$ is chosen as
$\frac{\sqrt{2}}{m}$) is used, the upper bound implied by our analysis gives that
\begin{align}
  \rho^2 \leq 1 - \frac{\sqrt{2}(2-\frac{\sqrt{2}}{m})}{4m\|\BM^{\dagger}\|_2^2}.
  \label{eq.ourbd}
\end{align}
By assumption we have $\|\BM\|_F^2 = m$, which implies that
\begin{align}
  \|\BM\|_2^2 = \|\BM^{T}\BM\|_2 \ge \frac{\text{tr}(\BM^{T}\BM)}{n} = \frac{\|\BM\|_F^2}{n} = \frac{m}{n}.
\end{align}
Hence, the bound obtained by~\cite{c26} will decrease with a speed of
$\frac{1}{m\log_2(m)\|\BM^{\dagger}\|_2^2}$ as $m$ increases, while the present work gives the decreasing speed of
$\frac{1}{m\|\BM^{\dagger}\|_2^2}$.
\end{section}

\begin{section}{Summary}
\label{sec:Conc}
By studying the stability property of a time-varying dynamical system that is related to the KA we
have been able to establish some new results concerning the convergence speed of the algorithm. The
new results are also compared to several related, previously available results. Let us end the
discussion by noting that the following two ideas can possibly lead to further improvements of the
results. One potential idea is trying to improve the inequality in \eqref{eq.loose}, since this part
introduces much of the approximations in establishing the main result of the note; another idea is
to try to find an optimal partitioning of the rows of the matrix~$\AM$, such that the right hand
side of~\eqref{eq.prodbd} is minimized.
\end{section}

\section{Acknowledgement}
The authors would like to thank the reviewers for useful comments and pointing out the references~\cite{c24,c26}, and Marcus Björk for helpful suggestions.

\end{document}